\newenvironment{definition}[1][Definition]{\begin{trivlist}
\item[\hskip \labelsep {\bfseries #1}]}{\end{trivlist}}
\newtheorem{theorem}{theorem}
\newtheorem{remark}{remark}
                            \def\@IEEEmarginE{0.680in}%
                            \def\@IEEEmarginW{0.775in}
\begin{document}
\title{\ \\ \LARGE\bf Topological chaos and chaotic iterations\\Application to Hash functions
\thanks{Christophe Guyeux and Jacques M. Bahi
are with the Computer Science Laboratory LIFC, University of Franche-Comte,
16, route de Gray - 25030 Besan\c con, France (phone: +33 381
666948; email: \{christophe.guyeux, jacques.bahi\}@univ-fcomte.fr).}
}
\author{Christophe Guyeux\\ Jacques M. Bahi, Senior Member IEEE}
\maketitle
\begin{abstract}
This paper introduces a new notion of chaotic algorithms. These algorithms are iterative and are based on so-called chaotic iterations. Contrary to all existing studies on chaotic iterations, we are not interested in stable states of such iterations but in their possible unpredictable behaviors. By establishing a link between chaotic iterations and the notion of Devaney's topological chaos, we give conditions ensuring that these kind of algorithms produce topological chaos. This leads to algorithms that are highly unpredictable.
After presenting the theoretical foundations of our approach, we are interested in its practical aspects. We show how the theoretical algorithms give rise to computer programs that produce true topological chaos, then we propose applications in the area of information security.
\end{abstract}
\section{INTRODUCTION}
The use of chaos in various fields of information security such as data hiding, hash functions, or pseudo-random number generators is almost always based on the conception of algorithms that include known chaotic maps such as the logistic map. The goal is to obtain an algorithm which preserves the chaotic properties of the included chaotic functions.
For example, in \cite{Wu2007} and \cite{Wu2007bis}, a watermark $W$ is encrypted in $W_{e}$ by using the bitwise exclusive or: $W_{e}=W\otimes X$, where $X$ is a logistic map. Then, pixels of the carrier image designed to embed these bits are selected with the 2-D Arnold's cat map. A similar use of chaotic maps for watermarking can be found in e.g. \cite{Zhao2004}, \cite{Zhou1997429}, \cite{liu2007bis}, \cite{Cong2006} and \cite{Dawei2004}. In the domain of hash functions, the use of chaotic maps is seen in e.g. \cite{Fei2005}, \cite{Wang2003}, \cite{Xiao20094346} and \cite{Peng2005}.
However, without rigorous proof it is not indisputable that an algorithm that includes chaotic functions preserves chaos properties: for example, using the logistic function with other ``obvious'' parameters does not guarantee that the algorithm is chaotic.
Moreover, even if the algorithm obtained by the inclusion of chaotic maps is itself chaotic, the implementation of this algorithm on a machine can cause it to lose its chaotic nature. This is due to the finite nature of the machine numbers set. These issues are discussed in this document.

In this paper we don't simply integrate chaotic maps hoping that the security algorithm remains chaotic, we conceive algorithms for computer security that we mathematically prove chaotic. We raise the question of their implementation, proving in doing so that it is possible to design a chaotic algorithm and a chaotic computer program.

The chaos theory we consider is Devaney's topological chaos. In addition to being recognized as one of the best mathematical definition of chaos, this theory offers a framework with qualitative and quantitative tools to evaluate the notion of unpredictability. As an application of our fundamental results, we are interested in the area of information security. We propose in this paper a new approach of security which is based on unpredictability as it is defined by Devaney's chaos.

The paper begins by introducing the theoretical foundation of the new approach. We recall the definition of Devaney's topological chaos as well as the definition of discrete chaotic iterations. Although these definitions are distinct from each other, we establish a link between them by giving conditions under which chaotic discrete iterations generate a Devaney's topological chaos. Because chaotic iterations are very suited for computer programming, this link allows us to generate Devaney's chaos topological in the computer science field.
After having studied the theoretical aspects of our approach we focus on practical aspects. The important question is how to preserve the topological chaos properties in a set of a finite number of states. This question is answered by introducing a concept we call \emph{secure chaotic information machine}. This is a Mealy machine generating chaos as defined by Devaney (Section \ref{section:CHAOS IN A FINITE STATE MACHINE}).
We also give some applications of our approach of chaos, in the domain of information security. Algorithms intended for information security and based on this new approach are explained in Section \ref{section:APPLICATIONS IN COMPUTER SCIENCE}, in the hash function domain
\medskip

The rest of this paper is organized as follows. In Section \ref{section:BASIC RECALLS}, the definitions of Devaney's chaos and discrete chaotic iterations are recalled. A link between these two notions is established and sufficient conditions to obtain Devaney's topological chaos from discrete chaotic iterations are given in Section \ref{section:CHAOTIC ITERATIONS AS DEVANEY'S CHAOS}. In Section \ref{section:CHAOS IN A FINITE STATE MACHINE}, the question on how to apply the theoretical result is raised and applications in the computer science field are given in Section \ref{section:APPLICATIONS IN COMPUTER SCIENCE}. The paper ends with a conclusion in which our contribution is summarized and the planned future work is discussed.
\label{section:RELATED WORKS AND CONTRIBUTION}
\section{BASIC RECALLS}
\label{section:BASIC RECALLS}
This section is devoted to basic definitions and terminologies in the field of topological chaos and in the one of chaotic iterations.
\subsection{Devaney's chaotic dynamical systems}
Consider a metric space $(\mathcal{X},d)$ and a continuous function $f:\mathcal{X}\longrightarrow \mathcal{X}$.
\begin{definition}
$f$ is said to be \emph{topologically transitive} if, for any pair of open sets $U,V \subset \mathcal{X}$, there exists $k>0$ such that $f^k(U) \cap V \neq \varnothing$.
\end{definition}
\begin{definition}
An element (a point) $x$ is a \emph{periodic element} (point) for $f$ of period $n\in \mathds{N}^*,$ if $f^{n}(x)=x$. The set of periodic points of $f$ is denoted $Per(f).$
\end{definition}
\begin{definition}
$(\mathcal{X},f)$ is said to be \emph{regular} if the set of periodic points is dense in $\mathcal{X}$,
\begin{equation*}
\forall x\in \mathcal{X},\forall \varepsilon >0,\exists p\in Per(f),d(x,p)\leqslant \varepsilon .
\end{equation*}
\end{definition}
\begin{definition}
\label{sensitivity} $f$ has \emph{sensitive dependence on initial conditions}
if there exists $\delta >0$ such that, for any $x\in \mathcal{X}$ and any neighborhood $V$ of $x$, there exists $y\in V$ and $n\geqslant 0$ such that $|f^{n}(x)-f^{n}(y)|>\delta $.
$\delta$ is called the \emph{constant of sensitivity} of $f$.
\end{definition}
Let us now recall the definition of a chaotic topological system, in the
sense of Devaney~\cite{Devaney}:
\begin{definition}
$f:\mathcal{X}\longrightarrow \mathcal{X}$ is said to be \emph{chaotic} on $%
\mathcal{X}$ if,
\begin{enumerate}
\item $f$ has sensitive dependence on initial conditions,
\item $f$ is topologically transitive,
\item $(\mathcal{X},f)$ is regular.
\end{enumerate}
\end{definition}
Therefore, quoting Robert Devaney: ``A chaotic map possesses three ingredients: unpredictability, indecomposability and an element of regularity. A chaotic system is unpredictable because of the sensitive dependence on initial conditions. It cannot be broken down or decomposed into two subsystems, because of topological transitivity. And, in the midst of this random behavior, we nevertheless have an element of regularity, namely the periodic points which are dense.'' Fundamentally different behaviors are thus possible and occur in an unpredictable way.
\subsection{Chaotic iterations}
\label{sec:chaotic iterations}
In the sequel $S^{n}$ denotes the $n^{th}$ term of a sequence $S$, $V_{i}$
denotes the $i^{th}$ component of a vector $V$ and $f^{k}=f\circ ...\circ f$
denotes the $k^{th}$ composition of a function $f$. Finally, the following
notation is used: $\llbracket1;N\rrbracket=\{1,2,\hdots,N\}$.
Let us consider a \emph{system} of a finite number $\mathsf{N}$ of elements (or \emph{%
cells}), so that each cell has a boolean \emph{state}. Then a sequence of
length $\mathsf{N}$ of boolean states of the cells corresponds to a
particular \emph{state of the system}. A sequence which elements belong to $ \llbracket 1;\mathsf{N} \rrbracket $ is called a \emph{strategy}. The set of all strategies is denoted by $\mathbb{S}.$
\begin{definition}
\label{Def:chaotic iterations}
The set $\mathds{B}$ denoting $\{0,1\}$, let $f:\mathds{B}^{\mathsf{N}%
}\longrightarrow \mathds{B}^{\mathsf{N}}$ be a function and $S\in \mathbb{S}
$ be a strategy. Then, the so-called \emph{chaotic iterations} are defined by $x^0\in \mathds{B}^{\mathsf{N}}$ and $\forall n\in \mathds{N}^{\ast },$
\begin{equation}
\forall i\in \llbracket1;\mathsf{N}\rrbracket%
,x_i^n=\left\{
\begin{array}{ll}
x_i^{n-1} & \text{ if }S^n\neq i \\
\left(f(x^{n-1})\right)_{S^n} & \text{ if }S^n=i.%
\end{array}%
\right.%
\end{equation}
\end{definition}
In other words, at the $n^{th}$ iteration, only the $S^{n}-$th cell is \textquotedblleft iterated\textquotedblright . Note that in a more general formulation, $S^n$ can be a subset of components and $f(x^{n-1})_{S^{n}}$ can be replaced by $f(x^{k})_{S^{n}}$ (where $k\leqslant n-1$), describing for example delays transmission (see e.g.~\cite{Bahi2002}). For the general definition of such chaotic iterations, see e.g.~\cite{Robert1986}.
\section{CHAOTIC ITERATIONS AS DEVANEY'S CHAOS}
\label{section:CHAOTIC ITERATIONS AS DEVANEY'S CHAOS}
\subsection{The new topological space}
In this section we will put our study in a topological context by defining a suitable metric space where chaotic iterations are continuous.
\subsubsection{Defining the iteration function and the phase space}
\label{Defining}
Let $\delta $ be the \emph{discrete boolean metric}, $\delta (x,y)=0\Leftrightarrow x=y.$ Given a function $f$, define the function:
\begin{equation*}
\begin{array}{lrll}
F_{f}: & \llbracket1;\mathsf{N}\rrbracket\times \mathds{B}^{\mathsf{N}} &
\longrightarrow & \mathds{B}^{\mathsf{N}} \\
& (k,E) & \longmapsto & \left( E_{j}.\delta (k,j)+f(E)_{k}.\overline{\delta
(k,j)}\right) _{j\in \llbracket1;\mathsf{N}\rrbracket},%
\end{array}%
\end{equation*}%
\noindent where + and . are the boolean addition and product operations.
Consider the phase space:
\begin{equation*}
\mathcal{X} = \llbracket 1 ; \mathsf{N} \rrbracket^\mathds{N} \times
\mathds{B}^\mathsf{N},
\end{equation*}
\noindent and the map defined on $\mathcal{X}$:
\begin{equation}
G_f\left(S,E\right) = \left(\sigma(S), F_f(i(S),E)\right), \label{Gf}
\end{equation}
\noindent where $\sigma$ is the \emph{shift} function defined by $\sigma (S^{n})_{n\in \mathds{N}}\in \mathbb{S}\longrightarrow (S^{n+1})_{n\in \mathds{N}}\in \mathbb{S}$ and $i$ is the \emph{initial function} \linebreak $i:(S^{n})_{n\in \mathds{N}} \in \mathbb{S}\longrightarrow S^{0}\in \llbracket 1;\mathsf{N}\rrbracket$. Then the chaotic iterations defined in (\ref{sec:chaotic iterations}) can be described by the following iterations:
\begin{equation*}
\left\{
\begin{array}{l}
X^0 \in \mathcal{X} \\
X^{k+1}=G_{f}(X^k).%
\end{array}%
\right.
\end{equation*}%
With this formulation, a shift function appears as a component of chaotic iterations. The shift function is a famous example of a chaotic map~\cite{Devaney} but its presence is not sufficient enough to claim $G_f$ as chaotic. In the rest of this section we prove rigorously that under some hypotheses, chaotic iterations generate topological chaos. Furthermore, due to the suitability of chaotic iterations for computer programming we also prove that this is true in the computer science field.
\subsubsection{Cardinality of $\mathcal{X}$}
By comparing $\mathbb{S}$ and $\mathds{R}$, we have the result.
\begin{theorem}
The phase space $\mathcal{X}$ has, at least, the cardinality of the continuum.
\end{theorem}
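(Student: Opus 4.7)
The plan is to bound the cardinality from below by exhibiting an injection from $\mathds{R}$ (or equivalently from $[0,1)$) into $\mathcal{X}$. Since $\mathcal{X} = \mathbb{S} \times \mathds{B}^{\mathsf{N}}$, it suffices to produce such an injection into the strategy space $\mathbb{S} = \llbracket 1;\mathsf{N}\rrbracket^{\mathds{N}}$: once we know $|\mathbb{S}|\geqslant |\mathds{R}|$, fixing any particular boolean vector $E_{0}\in \mathds{B}^{\mathsf{N}}$ gives an injection $S\mapsto (S,E_{0})$ from $\mathbb{S}$ into $\mathcal{X}$, and the desired inequality follows.

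To compare $\mathbb{S}$ with $\mathds{R}$, I would first assume (as is implicit throughout the paper) that $\mathsf{N}\geqslant 2$, so that $\{1,2\}\subseteq \llbracket 1;\mathsf{N}\rrbracket$. Then the map $(b_{n})_{n\in\mathds{N}}\in\{0,1\}^{\mathds{N}}\longmapsto (b_{n}+1)_{n\in\mathds{N}}$ realizes an injection of $\{0,1\}^{\mathds{N}}$ into $\mathbb{S}$, so it is enough to show $|\{0,1\}^{\mathds{N}}|\geqslant |\mathds{R}|$. For this I would use the binary expansion: to each $x\in [0,1)$ associate its unique \emph{non-terminating} binary expansion $(b_{n}(x))_{n\in\mathds{N}}$ (i.e., the expansion that does not eventually consist of $0$'s, chosen for every dyadic rational). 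This map $x\mapsto (b_{n}(x))_{n\in\mathds{N}}$ is injective because two distinct reals in $[0,1)$ differ at some binary digit, and non-terminating expansions are unique. Composing the injections $[0,1)\hookrightarrow \{0,1\}^{\mathds{N}}\hookrightarrow \mathbb{S}\hookrightarrow \mathcal{X}$ and recalling that $|[0,1)|=|\mathds{R}|$ yields $|\mathcal{X}|\geqslant |\mathds{R}|$, which is precisely the statement of the theorem.

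There is no real obstacle here; the only small pitfall is the non-uniqueness of binary expansions for dyadic rationals, which is neutralized by forcing the choice of the non-terminating expansion (equivalently, by working with $[0,1)$ and expansions that are not eventually constant equal to $1$). Everything else is a routine chain of embeddings.
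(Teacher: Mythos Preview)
Your argument is correct and uses essentially the same idea as the paper---comparing $\mathbb{S}$ with the unit interval via binary expansions---though the paper exhibits a surjection $\mathbb{S}\to[0,1]$ whereas you build an injection $[0,1)\hookrightarrow\mathbb{S}$. One minor slip: the ``non-terminating'' convention is undefined at $x=0$, while your parenthetical alternative (expansions not eventually equal to $1$) is the one that actually works on all of $[0,1)$; this does not affect the conclusion.
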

\begin{proof}
Let $\varphi$ be the map which transforms a strategy into the binary representation of an element in $[0,1$[, as follows. If the $n^{th}$ term of the strategy is 0, then the $n^{th}$ associated digit is 0, or else it is equal to 1.
With this construction, $\varphi : \llbracket 1 ; \mathsf{N} \rrbracket^\mathds{N} \longrightarrow [0,1]$ is surjective. But $]0,1[$ is isomorphic to $\mathds{R}$ ($x \in ]0,1[\mapsto tan(\pi(x-\frac{1}{2}))$ is an isomorphism), so the cardinality of $\llbracket 1 ; \mathsf{N} \rrbracket^\mathds{N}$ is greater or equal to the cardinality of $\mathds{R}$. As a consequence, the cardinality of the Cartesian product $\mathcal{X} = \llbracket 1 ; \mathsf{N} \rrbracket^\mathds{N} \times \mathds{B}^\mathsf{N}$ is greater or equal to the cardinality of $\mathds{R}$.
\end{proof}
\begin{remark}
This result is independent from the number of cells of the system.
\end{remark}
\subsubsection{A new distance}
We define a new distance between two points $X = (S,E), Y = (\check{S},\check{E})\in
\mathcal{X}$ by%
\begin{equation*}
d(X,Y)=d_{e}(E,\check{E})+d_{s}(S,\check{S}),
\end{equation*}
\noindent where
\begin{equation*}
\left\{
\begin{array}{lll}
\displaystyle{d_{e}(E,\check{E})} & = & \displaystyle{\sum_{k=1}^{\mathsf{N}%
}\delta (E_{k},\check{E}_{k})}, \\
\displaystyle{d_{s}(S,\check{S})} & = & \displaystyle{\dfrac{9}{\mathsf{N}}%
\sum_{k=1}^{\infty }\dfrac{|S^k-\check{S}^k|}{10^{k}}}.%
\end{array}%
\right.
\end{equation*}
If the floor value $\lfloor d(X,Y)\rfloor $ is equal to $n$,
then the systems $E, \check{E}$ differ in $n$ cells. In addition, $d(X,Y) - \lfloor d(X,Y) \rfloor $ is a measure of the differences between strategies $S$ and $\check{S}$. More precisely, this floating part is less than $10^{-k}$ if and only if the first $k$
terms of the two strategies are equal. Moreover, if the $k^{th}$ digit is nonzero, then the $k^{th}$ terms of the two
strategies are different.
\subsubsection{Continuity of the iteration function}
To prove that chaotic iterations are an example of topological chaos in the
sense of Devaney ~\cite{Devaney}, $G_{f}$ must be continuous in the metric
space $(\mathcal{X},d)$.
\begin{theorem}
$G_f$ is a continuous function.
\end{theorem}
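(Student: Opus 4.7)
The plan is to prove continuity from the $\varepsilon$--$\eta$ definition directly, exploiting the fact that the metric $d = d_e + d_s$ separates the two coordinates in a very rigid way: $d_e$ takes values in $\mathbb{N}$, so being close in $d$ already forces exact agreement of the Boolean parts, while closeness in $d_s$ forces agreement of an initial segment of the strategy.

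Fix $X=(S,E)\in\mathcal{X}$ and $\varepsilon>0$. First I would pick an integer $k_0$ large enough that $\frac{N-1}{N}\cdot 10^{-k_0}<\varepsilon$; then I would set $\eta=10^{-(k_0+1)}$. The claim is that $d(X,Y)<\eta$ implies $d(G_f(X),G_f(Y))<\varepsilon$. Since $\eta<1$ and $d_e$ is integer-valued, the hypothesis forces $E=\check{E}$. The remaining distance $d_s(S,\check{S})<10^{-(k_0+1)}$ implies, by the very description of $d_s$ given in the paper (the fractional part is $<10^{-k}$ iff the first $k$ terms coincide), that $S^j=\check{S}^j$ for $0\leqslant j\leqslant k_0$.

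With those two facts in hand, the computation of $G_f(X)$ and $G_f(Y)$ becomes transparent. For the Boolean component, $F_f(i(S),E)$ depends only on $S^0$ and $E$; since $S^0=\check{S}^0$ and $E=\check{E}$, the two Boolean outputs are identical, so $d_e$ contributes $0$. For the strategy component, $\sigma(S)$ and $\sigma(\check{S})$ agree on their first $k_0$ terms because $S$ and $\check{S}$ agreed on their first $k_0+1$ terms, so using $|(\sigma S)^k-(\sigma\check{S})^k|\leqslant N-1$ one gets
\begin{equation*}
d_s(\sigma(S),\sigma(\check{S}))\leqslant \frac{9}{N}\sum_{k=k_0+1}^{\infty}\frac{N-1}{10^{k}}=\frac{N-1}{N}\cdot 10^{-k_0}<\varepsilon.
\end{equation*}
Summing the two bounds yields $d(G_f(X),G_f(Y))<\varepsilon$, as required.

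The only real subtlety, and the step where I would be most careful, is the interplay between the two pieces of the metric: one has to commit to an $\eta$ that is simultaneously $<1$ (to kill all Boolean perturbations, since $F_f$ is otherwise highly discontinuous as a function on $\mathds{B}^{\mathsf{N}}$ with the Hamming metric) and small enough on the $d_s$ scale to align enough initial strategy terms that (i) the index $S^0$ fed to $F_f$ is preserved and (ii) the shift only loses one digit of precision. Everything else is essentially bookkeeping on geometric series; no deep estimate is needed, and no hypothesis on $f$ is used, which is consistent with the fact that every $f:\mathds{B}^{\mathsf{N}}\to\mathds{B}^{\mathsf{N}}$ is (trivially) continuous in this discrete setting.
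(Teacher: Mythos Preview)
Your argument is correct. The key observations are exactly right: because $d_e$ is integer-valued, any $\eta<1$ forces $E=\check{E}$; and because $d_s$ encodes agreement of initial segments, a small enough $\eta$ aligns $S^0$ (so the Boolean outputs of $F_f$ match) and enough further terms that the shift only costs one decimal of precision. The geometric-series bound is clean and the remark that no hypothesis on $f$ is needed is well taken.

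The paper proceeds differently: it argues via \emph{sequential} continuity, taking a sequence $(S^n,E^n)\to(S,E)$ and showing $G_f(S^n,E^n)\to G_f(S,E)$ by first finding a threshold $n_0$ after which $E^n=E$, then a threshold $n_1$ after which $S^n_0=S_0$, and finally, for a given $\varepsilon$, a threshold $n_2$ aligning enough strategy terms. Your direct $\varepsilon$--$\eta$ argument is shorter and gives an explicit modulus of continuity (indeed, it shows $G_f$ is uniformly continuous, even $10$-Lipschitz in the $d_s$ coordinate once the Boolean parts are pinned), whereas the paper's sequential proof is slightly more discursive and splits into cases $\varepsilon\geqslant 1$ and $\varepsilon<1$. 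Substantively both rest on the same two structural facts about the metric; the difference is packaging.
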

\begin{proof}
We use the sequential continuity.
Let $(S^n,E^n)_{n\in \mathds{N}}$ be a sequence of the phase space $%
\mathcal{X}$, which converges to $(S,E)$. We will prove that $\left(
G_{f}(S^n,E^n)\right) _{n\in \mathds{N}}$ converges to $\left(
G_{f}(S,E)\right) $. Let us recall that for all $n$, $S^n$ is a strategy,
thus, we consider a sequence of strategies (\emph{i.e.} a sequence of
sequences).\newline
As $d((S^n,E^n);(S,E))$ converges to 0, each distance $d_{e}(E^n,E)$ and $d_{s}(S^n,S)$ converges
to 0. But $d_{e}(E^n,E)$ is an integer, so $\exists n_{0}\in \mathds{N},$ $%
d_{e}(E^n,E)=0$ for any $n\geqslant n_{0}$.\newline
In other words, there exists a threshold $n_{0}\in \mathds{N}$ after which no
cell will change its state:
\begin{equation*}
\exists n_{0}\in \mathds{N},n\geqslant n_{0}\Longrightarrow E^n = E.
\end{equation*}%
In addition, $d_{s}(S^n,S)\longrightarrow 0,$ so $\exists n_{1}\in %
\mathds{N},d_{s}(S^n,S)<10^{-1}$ for all indexes greater than or equal to $%
n_{1}$. This means that for $n\geqslant n_{1}$, all the $S^n$ have the same
first term, which is $S^0$:%
\begin{equation*}
\forall n\geqslant n_{1},S_0^n=S_0.
\end{equation*}%
Thus, after the $max(n_{0},n_{1})^{th}$ term, states of $E^n$ and $E$ are
identical and strategies $S^n$ and $S$ start with the same first term.\newline
Consequently, states of $G_{f}(S^n,E^n)$ and $G_{f}(S,E)$ are equal,
so, after the $max(n_0, n_1)^{th}$ term, the distance $d$ between these two points is strictly less than 1.\newline
\noindent We now prove that the distance between $\left(
G_{f}(S^n,E^n)\right) $ and $\left( G_{f}(S,E)\right) $ is convergent to
0. Let $\varepsilon >0$. \medskip
\begin{itemize}
\item If $\varepsilon \geqslant 1$, we see that distance
between $\left( G_{f}(S^n,E^n)\right) $ and $\left( G_{f}(S,E)\right) $ is
strictly less than 1 after the $max(n_{0},n_{1})^{th}$ term (same state).
\medskip
\item If $\varepsilon <1$, then $\exists k\in \mathds{N},10^{-k}\geqslant
\varepsilon \geqslant 10^{-(k+1)}$. But $d_{s}(S^n,S)$ converges to 0, so
\begin{equation*}
\exists n_{2}\in \mathds{N},\forall n\geqslant
n_{2},d_{s}(S^n,S)<10^{-(k+2)},
\end{equation*}%
thus after $n_{2}$, the $k+2$ first terms of $S^n$ and $S$ are equal.
\end{itemize}
\noindent As a consequence, the $k+1$ first entries of the strategies of $%
G_{f}(S^n,E^n)$ and $G_{f}(S,E)$ are the same ($G_{f}$ is a shift of strategies) and due to the definition of $d_{s}$, the floating part of
the distance between $(S^n,E^n)$ and $(S,E)$ is strictly less than $%
10^{-(k+1)}\leqslant \varepsilon $.\bigskip \newline
In conclusion,
$$
\forall \varepsilon >0,\exists N_{0}=max(n_{0},n_{1},n_{2})\in \mathds{N}%
,\forall n\geqslant N_{0},
$$
$$
 d\left( G_{f}(S^n,E^n);G_{f}(S,E)\right)
\leqslant \varepsilon .
$$
$G_{f}$ is consequently continuous.
\end{proof}
In this section, we proved that chaotic iterations can be modeled as a dynamical system in a topological space. In the next section, we show that chaotic iterations are a case of topological chaos, according to Devaney.
\subsection{Discrete chaotic iterations as topological chaos}
To prove that we are in the framework of Devaney's topological chaos, we have to find a boolean function $f$ such that $G_f$ satisfies the regularity, transitivity and sensitivity conditions. We will prove that the vectorial logical negation
\begin{equation}
f_{0}(x_{1},%
\hdots,x_{\mathsf{N}})=(\overline{x_{1}},\hdots,\overline{x_{\mathsf{N}}})
\label{f0}
\end{equation}
\noindent is a suitable function.
\subsubsection{Regularity}
\label{regularite}
\begin{theorem}
Periodic points of $G_{f_0}$ are dense in $\mathcal{X}$.
\end{theorem}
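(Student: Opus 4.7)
The plan is to approximate an arbitrary $(S,E)\in\mathcal{X}$ by a periodic point of $G_{f_0}$ of the form $(\check{S},E)$. Since $d=d_e+d_s$, taking $\check{E}=E$ zeroes out $d_e$, so the task reduces to two simultaneous goals: build a periodic strategy $\check{S}$ whose first many terms coincide with those of $S$, and arrange its period $T$ so that iterating $G_{f_0}$ for $T$ steps starting from $(\check{S},E)$ returns exactly to $(\check{S},E)$. The second requirement is what actually makes $(\check{S},E)$ a genuine periodic point, not merely a point with a periodic strategy coordinate.

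Given $\varepsilon>0$, I would first pick $k\in\mathds{N}^{\ast}$ large enough that any two strategies coinciding on their first $k$ terms are at $d_s$-distance less than $\varepsilon$; this is possible by the characterization of $d_s$ recalled immediately after its definition. Starting from $E$ and using the prefix $S^0,S^1,\ldots,S^{k-1}$, I run the chaotic iteration for $k$ steps: because $f_0$ is vectorial negation, each step simply flips the single coordinate indicated by the current strategy term, and we arrive at some $E^k\in\mathds{B}^{\mathsf{N}}$. Let $\{i_1,\ldots,i_m\}\subseteq\llbracket 1;\mathsf{N}\rrbracket$ be the (possibly empty) set of coordinates on which $E^k$ differs from $E$. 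Define $\check{S}$ to be the infinite repetition, with period $T=k+m$, of the length-$T$ word $S^0,S^1,\ldots,S^{k-1},i_1,i_2,\ldots,i_m$, and set $\check{E}=E$.

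To close the argument I would verify two things. First, $(\check{S},\check{E})$ is periodic of period $T$: the strategy is $T$-periodic by construction, and iterating $G_{f_0}$ for $T$ steps reproduces the drift from $E$ to $E^k$ along the prefix and then flips back precisely the coordinates $i_1,\ldots,i_m$, restoring $E$. Second, $d\bigl((S,E),(\check{S},\check{E})\bigr)=d_s(S,\check{S})<\varepsilon$ by the choice of $k$. The main obstacle is exactly the coupling between strategy and state: making $\check{S}$ periodic by itself is trivial, but without care the state at the end of one period will fail to match $E$ and $(\check{S},E)$ will not be periodic under $G_{f_0}$. The corrective tail $i_1\cdots i_m$ is what resolves this, and its correctness relies essentially on $f_0$ being an involution acting componentwise, so that a single targeted flip per mismatched coordinate restores $E$ without disturbing the already-matched coordinates.
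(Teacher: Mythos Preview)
Your proposal is correct and follows essentially the same approach as the paper: set the state component equal to the target, match a sufficiently long prefix of the strategy to get $d_s<\varepsilon$, append a corrective tail that flips exactly the mismatched coordinates so the state returns to $E$, and then repeat periodically. The only differences are cosmetic (indexing and notation); your explicit remark that correctness of the corrective tail hinges on $f_0$ acting as a componentwise involution is in fact clearer than the paper's account.
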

\begin{proof}
Let $(\check{S}, \check{E})\in \mathcal{X}$ and $\varepsilon >0$. We are
looking for a periodic point $(\widetilde{S},\widetilde{E})$ satisfying $d((%
\check{S}, \check{E});(\widetilde{S},\widetilde{E}))<\varepsilon$.
As $\varepsilon$ can be strictly lesser than 1, we must choose $%
\widetilde{E} = \check{E}$. Let us define $k_0(\varepsilon) =\lfloor
log_{10}(\varepsilon )\rfloor +1$ and consider the set
\[
\mathcal{S}_{\check{S}, k_0(\varepsilon)} = \left\{ S \in \mathbb{S} / S^k =
\check{S}^k, \forall k \leqslant k_0(\varepsilon) \right\}.
\]
Then, $\forall S \in \mathcal{S}_{\check{S}, k_0(\varepsilon)}, d((S, \check{%
E});(\check{S}, \check{E})) < \varepsilon$. It remains to choose $\widetilde{%
S} \in \mathcal{S}_{\check{S}, k_0(\varepsilon)}$ such that $(\widetilde{S},%
\widetilde{E}) = (\widetilde{S},\check{E})$ is a periodic point for $%
G_{f_0}$.
Let $\mathcal{J} = \left\{ i \in \{1, ..., \mathsf{N}\} / E_i \neq \check{%
E}_i, \text{ where } (S, E) = G_{f_0}^{k_0} (\check{S}, \check{E}) \right\}$%
, $i_0 = card(\mathcal{J})$ and $j_1 <j_2 < ... < j_{i_0}$ the elements of $%
\mathcal{J}$. Then, $\widetilde{S} \in \mathcal{S}_{\check{S},
k_0(\varepsilon)}$ defined by
\begin{itemize}
\item $\widetilde{S}^k = \check{S}^k$, if $k \leqslant k_0(\varepsilon)$,
\item $\widetilde{S}^k = j_{k-k_0(\varepsilon)}$, if $k \in
\{k_0(\varepsilon)+1, k_0(\varepsilon)+2, ..., k_0(\varepsilon)+i_0\}$,
\item and $\widetilde{S}^{k}=\widetilde{S}^{j}$, where $j\leqslant
k_{0}(\varepsilon )+i_{0}$ is satisfying $j\equiv k~(\text{mod }%
k_{0}(\varepsilon )+i_{0})$, if $k>k_{0}(\varepsilon )+i_{0}$,
\end{itemize}
\noindent is such that $%
(\widetilde{S},\widetilde{E})$ is a periodic point, of period $%
k_{0}(\varepsilon )+i_{0}$, which is $\varepsilon -$closed to $(\check{S},%
\check{E})$.\newline As a conclusion, $(\mathcal{X},G_{f_{0}})$ is
regular.
\end{proof}
\subsubsection{Transitivity}
\begin{theorem}
$(\mathcal{X},G_{f_0})$ is topologically transitive.
\end{theorem}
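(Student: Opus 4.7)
The plan is to mirror the spirit of the regularity proof: given two non-empty open sets $U, V \subset \mathcal{X}$, I will exhibit a single point $(\widetilde{S}, \widetilde{E}) \in U$ together with an integer $N$ such that $G_{f_0}^N(\widetilde{S}, \widetilde{E}) \in V$. The decisive feature I exploit is that $f_0$ is the componentwise negation, so each chaotic iteration step simply flips the bit whose index is the current strategy term. Consequently, by listing in the strategy the bits I want to invert, I can steer the state from any starting configuration to any prescribed target in finitely many steps.

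I begin by choosing centers $(\check{S}, \check{E}) \in U$ and $(\check{S}', \check{E}') \in V$ together with radii $\varepsilon_U, \varepsilon_V \in (0,1)$ such that the corresponding open balls lie inside $U$ and $V$ respectively. Because both radii are strictly smaller than $1$, the integer part $\lfloor d \rfloor$ must vanish inside each ball, so being in $B((\check{S}, \check{E}), \varepsilon_U)$ forces the state to equal $\check{E}$ and the strategy to agree with $\check{S}$ on its first $k_U := k_0(\varepsilon_U)$ terms, and analogously for $V$ with $k_V := k_0(\varepsilon_V)$.

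I then construct $\widetilde{S}$ in three consecutive blocks. The first $k_U$ terms are copied from $\check{S}$, which already places $(\widetilde{S}, \check{E}) \in U$ regardless of what follows. Running the chaotic iteration for these $k_U$ steps from $(\widetilde{S}, \check{E})$ produces some intermediate state $E^{k_U}$; I let $\mathcal{J} = \{i \in \llbracket 1 ; \mathsf{N} \rrbracket \mid (E^{k_U})_i \neq \check{E}'_i\}$ and enumerate its elements $j_1 < \cdots < j_{i_0}$. Appending $j_1, \ldots, j_{i_0}$ to $\widetilde{S}$ guarantees that the state at time $N := k_U + i_0$ is exactly $\check{E}'$, since each $f_0$-flip at position $j_\ell$ corrects exactly one disagreement. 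Finally, I set the next $k_V$ terms of $\widetilde{S}$ equal to $\check{S}'^{1}, \ldots, \check{S}'^{k_V}$ and complete the strategy arbitrarily. Then $G_{f_0}^N(\widetilde{S}, \check{E}) = (\sigma^N(\widetilde{S}), \check{E}')$ sits inside $B((\check{S}', \check{E}'), \varepsilon_V) \subset V$, so $G_{f_0}^N(U) \cap V \neq \varnothing$.

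I do not anticipate any serious obstacle. The only subtle points are the need to take $\varepsilon_V < 1$, which is what forces state coincidence and makes the index set $\mathcal{J}$ meaningful, and the translation from a radius $\varepsilon$ to a number $k_0(\varepsilon)$ of constrained strategy terms, already handled in the regularity proof. Everything else is routine bookkeeping built around the observation that under $f_0$ each iteration is an involutive bit-flip, so the middle block of $\widetilde{S}$ is just an explicit list of target coordinates.
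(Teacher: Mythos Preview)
Your proposal is correct and follows essentially the same approach as the paper's own proof: pick balls inside $U$ and $V$, force the state to be the center of the first ball and copy enough initial terms of its strategy, then list the mismatched coordinates so that $f_0$ flips them one by one to reach the target state, and finally append enough terms of the second center's strategy to land in the second ball. The only cosmetic difference is that the paper extends the tail of $\widetilde{S}$ to coincide with all of $S_B$ rather than just its first $k_V$ terms, which is harmless overkill.
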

\begin{proof}
Let us define $\mathcal{E}:\mathcal{X}\rightarrow \mathbb{B}^{\mathsf{N}},$
such that $\mathcal{E(}S,E)=E.$ Let $\mathcal{B}_{A}=\mathcal{B}(X_{A},r_{A})
$ and $\mathcal{B}_{B}=\mathcal{B}(X_{B},r_{B})$ be two open balls of $%
\mathcal{X}$, with $X_{A}=(S_{A},E_{A})$ and $X_{B}=(S_{B},E_{B})$. We are
looking for $\widetilde{X}=(\widetilde{S},\widetilde{E})$ in $\mathcal{B}_{A}
$ such that $\exists n_{0}\in \mathbb{N},G_{f_{0}}^{n_{0}}(\widetilde{X})\in
\mathcal{B}_{B}$.\newline
$\widetilde{X}$ must be in $\mathcal{B}_{A}$ and $r_{A}$ can be strictly
lesser than 1, so $\widetilde{E}=E_{A}$. Let $k_{0}=\lfloor \log
_{10}(r_{A})+1\rfloor $. Then $\forall S\in \mathbb{S}$, if $%
S^{k}=S_{A}^{k},\forall k\leqslant k_{0}$, then $(S,\widetilde{E})\in
\mathcal{B}_{A}$. Let us notice $(\check{S},\check{E}%
)=G_{f_{0}}^{k_{0}}(S_{A},E_{A})$ and $c_{1},...,c_{k_{1}}$ the elements of
the set $\{i\in \llbracket1,\mathsf{N}\rrbracket/\check{E}_{i}\neq \mathcal{E%
}(X_{B})_{i}\}.$ So any point $X$ of the set
\[
\{(S,E_{A})\in \mathcal{X}/\forall k\leqslant k_{0},S^{k}=S_{A}^{k}\text{
and }\forall k\in \llbracket1,k_{1}\rrbracket,S^{k_{0}+k}=c_{k}\}
\]%
is satisfying $X\in \mathcal{B}_{A}$ and $\mathcal{E}\left(
G_{f_{0}}^{k_{0}+k_{1}}(X)\right) =E_{B}$.
\noindent Lastly, let us define $k_2 = \lfloor \log_{10}(r_B)\rfloor +1$. Then $%
\widetilde{X} = (\widetilde{S}, \widetilde{E}) \in \mathcal{X}$ defined by:
\begin{enumerate}
\item $\widetilde{X} = E_A$,
\item $\forall k \leqslant k_0, \widetilde{S}^k = S_A^k$,
\item $\forall k \in \llbracket 1, k_1 \rrbracket,$ $\widetilde{S}^{k_0+k} =
c_k$,
\item $\forall k \in \mathbb{N}^*, \widetilde{S}^{k_0+k_1+k} = S_B^k$,
\end{enumerate}
\noindent is such that $\widetilde{X} \in \mathcal{B}_A$ and $G_{f_0}^{k_0+k_1}(%
\widetilde{X}) \in \mathcal{B}_B$.
\end{proof}

\subsubsection{Sensitive dependence on initial conditions}
\begin{theorem}
$(\mathcal{X},G_{f_0})$ has sensitive dependence on initial conditions.
\end{theorem}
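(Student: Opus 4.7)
The plan is a direct construction that mirrors the $k_0$-truncation trick already used for regularity and transitivity. Take the sensitivity constant to be $\delta = 1$; the argument will in fact exhibit two iterates at distance at least $2$. Given any $(S,E)\in\mathcal{X}$ and any neighborhood $V$ of $(S,E)$, fix $\varepsilon\in(0,1)$ such that the open ball of radius $\varepsilon$ around $(S,E)$ is contained in $V$.

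First, set $k_0 = \lfloor \log_{10}(\varepsilon)\rfloor + 1$, chosen so that any strategy $\widetilde{S}$ agreeing with $S$ on its first $k_0$ terms satisfies $d_s(S,\widetilde{S}) < \varepsilon$. Define $\widetilde{S}$ by $\widetilde{S}^k = S^k$ for $k\leqslant k_0$, pick $\widetilde{S}^{k_0+1}\in\llbracket 1,\mathsf{N}\rrbracket\setminus\{S^{k_0+1}\}$ (possible in the nontrivial regime $\mathsf{N}\geqslant 2$), and let the remaining entries be arbitrary. Setting $\widetilde{E}=E$, the point $(\widetilde{S},\widetilde{E})$ lies in $V$.

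Next, exploit the shift structure of $G_{f_0}$. Since the first $k_0$ terms of $S$ and $\widetilde{S}$ coincide and the initial states are equal, both orbits share the same state $\bar{E}$ after $k_0$ iterations. At iteration $k_0+1$, the image of $(S,E)$ has state $\bar{E}$ with cell $S^{k_0+1}$ flipped, whereas the image of $(\widetilde{S},\widetilde{E})$ has state $\bar{E}$ with cell $\widetilde{S}^{k_0+1}$ flipped. These two states therefore differ in exactly the two distinct indices $S^{k_0+1}$ and $\widetilde{S}^{k_0+1}$, so $d_e$ contributes at least $2$ and $d\bigl(G_{f_0}^{k_0+1}(S,E),\, G_{f_0}^{k_0+1}(\widetilde{S},\widetilde{E})\bigr)\geqslant 2 > \delta$.

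The main obstacle is really only bookkeeping: verifying the choice of $k_0$ does force $d_s<\varepsilon$, and tracking precisely which cell is flipped on the step in question. As a sanity check, sensitivity is also a formal consequence of the theorem of Banks et al.\ (transitivity together with density of periodic points on an infinite metric space implies sensitivity), and the cardinality theorem established above guarantees that $\mathcal{X}$ is infinite; but the explicit construction above keeps the section self-contained and yields an effective value of the sensitivity constant.
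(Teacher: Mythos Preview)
Your argument is correct, but it is not the route the paper takes. The paper's entire proof is a one-line appeal to Banks \emph{et al.}: since $(\mathcal{X},G_{f_0})$ has already been shown to be regular and topologically transitive, and $\mathcal{X}$ is infinite, sensitivity is an automatic corollary. You mention this yourself as a sanity check, but the paper uses it as the whole proof.

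Your construction is genuinely different in that it is direct and self-contained: you perturb only the $(k_0+1)$-st term of the strategy, run both orbits forward $k_0$ steps to a common state, and then observe that the negation $f_0$ flips two distinct cells on the next step, forcing $d_e\geqslant 2$. This buys you an explicit sensitivity constant (indeed $\delta$ can be taken as any value below $2$, not merely below $1$), and it avoids importing the Banks--Brooks--Cairns--Davis--Stacey theorem. The paper's approach, by contrast, is shorter and reuses the work already done for regularity and transitivity, at the cost of giving no quantitative information about $\delta$. Both are valid; your version would be the natural choice if one later wanted to compute or optimize the constant of sensitivity. The only caveat is the one you already flag: the formula $k_0=\lfloor\log_{10}(\varepsilon)\rfloor+1$ inherited from the paper needs its sign straightened out for $\varepsilon<1$, but this is the same bookkeeping issue present in the regularity and transitivity proofs and does not affect the logic.
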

\begin{proof}
Banks \emph{et al.} proved in ~\cite{Banks92} that having sensitive dependence is a consequence of being regular and topologically transitive.
\end{proof}

\subsubsection{Devaney's Chaos}
\label{sec:DevaneysChaos}
In conclusion, $(\mathcal{X},G_{f_0})$ is topologically transitive, regular and has sensitive dependence on initial conditions. Then we have the following result:
\begin{theorem}
\label{theorem:Chaos}
$G_{f_0}$ is a chaotic map on $(\mathcal{X},d)$ in the sense of Devaney.
\end{theorem}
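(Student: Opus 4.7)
The plan is essentially a bookkeeping step: Devaney's definition of chaos, as recalled in the first section of the paper, requires exactly three properties of a continuous map $f:\mathcal{X}\to \mathcal{X}$, namely sensitive dependence on initial conditions, topological transitivity, and density of periodic points (regularity). All three have just been established for $G_{f_0}$ in the preceding theorems of this subsection, and continuity of $G_f$ (for any $f$, hence in particular for $f_0$) was proved in the earlier subsection on the topological setup. So my proof would do nothing more than invoke these four results in sequence.

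Concretely, I would first remind the reader that $(\mathcal{X},d)$ is a metric space, as defined in the subsection ``A new distance,'' and that $G_{f_0}:\mathcal{X}\to\mathcal{X}$ is continuous by the theorem of the subsection ``Continuity of the iteration function'' (applied with $f=f_0$). Then I would check off the Devaney conditions one by one: regularity of $(\mathcal{X},G_{f_0})$ comes from the theorem of subsection ``Regularity''; topological transitivity comes from the theorem of subsection ``Transitivity''; sensitive dependence on initial conditions comes from the theorem of subsection ``Sensitive dependence on initial conditions,'' which itself relies on the Banks \emph{et al.} result \cite{Banks92}. Together these three properties are the three items in the definition of chaos in the sense of Devaney recalled at the end of the subsection ``Devaney's chaotic dynamical systems.''

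There is no real obstacle here; the substantive work was done in the three preceding proofs (in particular, the explicit constructions of a dense periodic orbit and of a transitive orbit, which required choosing strategies whose prefixes agree on $k_0$ terms and then concatenating a finite ``correction'' segment). The only minor point worth stating explicitly in the writeup is that the continuity hypothesis in Devaney's definition is genuinely needed and has indeed been verified for $G_{f_0}$, so one cannot simply list the three dynamical properties and stop. With that observation, the proof reduces to a one-sentence appeal to the definition, and I would present it in exactly that compact form.
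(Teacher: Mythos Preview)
Your proposal is correct and matches the paper's own treatment: the paper does not give a separate proof of this theorem but simply states it as an immediate conclusion of the three preceding theorems on regularity, transitivity, and sensitivity. Your additional remark that continuity of $G_{f_0}$ must also be invoked is a fair observation, since Devaney's definition is stated for continuous maps and this was established earlier for general $G_f$.
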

\begin{remark}
We have proven that the set $\mathcal{C}$ of the iterate functions $f$ so that $(\mathcal{X}, G_f)$ is chaotic (according to the definition of Devaney), is a nonempty set. In future work, we will deepen the study of $\mathcal{C}$, among other things, by computing its cardinality and characterizing this set.
\end{remark}
\section{CHAOS IN A FINITE STATE MACHINE}
\label{section:CHAOS IN A FINITE STATE MACHINE}
\subsection{The approach presented in this paper}
In the section above, it has been proven that discrete chaotic iterations can be put in the field of discrete dynamical systems:
\begin{equation}
\left\{
\begin{array}{l}
x^{0}\in \mathcal{X} \\
x^{n+1} = G_f(x^{n}),
\end{array}%
\right.
\end{equation}
where $(\mathcal{X},d)$ is a metric space and $G_f$ is a continuous function. Thus, it becomes possible to study the topological behavior of those chaotic iterations.
Precisely, it has been proven that if the iterate function is based on the vectorial logical negation $f_0$, then chaotic iterations generate chaos according to Devaney. Therefore chaotic iterations, as Devaney's topological chaos, satisfy: sensitive dependence on the initial conditions, unpredictability, indecomposability, and uniform repartition.
\medskip

Two major problems typically occur when trying to develop a computer program with chaotic behavior. First, computers have a finite number of states, so the computations always enter into cycles. Second, the properties of chaotic algorithms are inherited from a real chaotic sequence (like a logistic map) and this behavior is lost when computing floating-point numbers (unlike real numbers, floating-point numbers have a finite decimal part). These two problems are solved in this paper due to the two following ideas:

\begin{enumerate}
\item Chaotic iterations are Mealy machines. 
At each iteration, data corresponding to the current strategy \ref{sec:chaotic iterations} are taken from the outside world, then computations are realized into the memory (the updates of the finite state of the system). The last state is returned after a desired number of iterations. Contrary to the existing points of view, based on a Moore machine, this machine can pass two times in a same state, without continuing the same evolution. Section \ref{subsection: A chaotic Mealy machine} explains in detail this original contribution, which allows the realization of true chaos in computers.
\item As mentioned above, the strategy $S$ defined in \ref{sec:chaotic iterations} will not depend on real numbers, but on integers taken from the outside world. We work with the set $\mathcal{X}$ defined in Subsection \ref{Defining} which has the cardinality of the continuum. Section \ref{subsection:The particular case of regularity} discusses the consequences of dealing with finite strategies in practice.
\end{enumerate}
\medskip
\medskip
\subsection{A chaotic Mealy machine}
\label{subsection: A chaotic Mealy machine}
The algorithms considered chaotic usually follow the principle of a Moore machine. After having received its initial states, the machine works alone with no interaction with the outside world. Its outputs only depend on the different states of the machine. The main problem is that when a machine with a finite number of states reaches a same state twice, the two following evolutions are identical. Such an algorithm always enters into a cycle. This behavior is highly predictable and cannot be set as chaotic. Attempts to define a discrete notion of chaos have been proposed, but they are not completely satisfactory and are less recognized than the notion of Devaney's topological chaos.
This problem does not occur in a Mealy machine. This finite state transducer generates an output $O$ computed from its current state $E$ \emph{and} the current value of an input $S$ (Fig. \ref{fig:Mealy}). 
By this accord, even if the machine reaches the same state twice, the corresponding following evolutions may be completely different depending on the values of the inputs. The method presented here is based on such a machine. Indeed, chaotic iterations are a Mealy machine: at each iteration, the computations take into account new inputs (strategies) which are obtained, for example, from the media on which our algorithm applies. Roughly speaking, as the set of all media is infinite, we obtain a finite state machine which can evolve in infinite ways, thus making it possible to obtain a true chaos in computers.
\begin{figure}[htb]
\centerline{\epsfig{figure=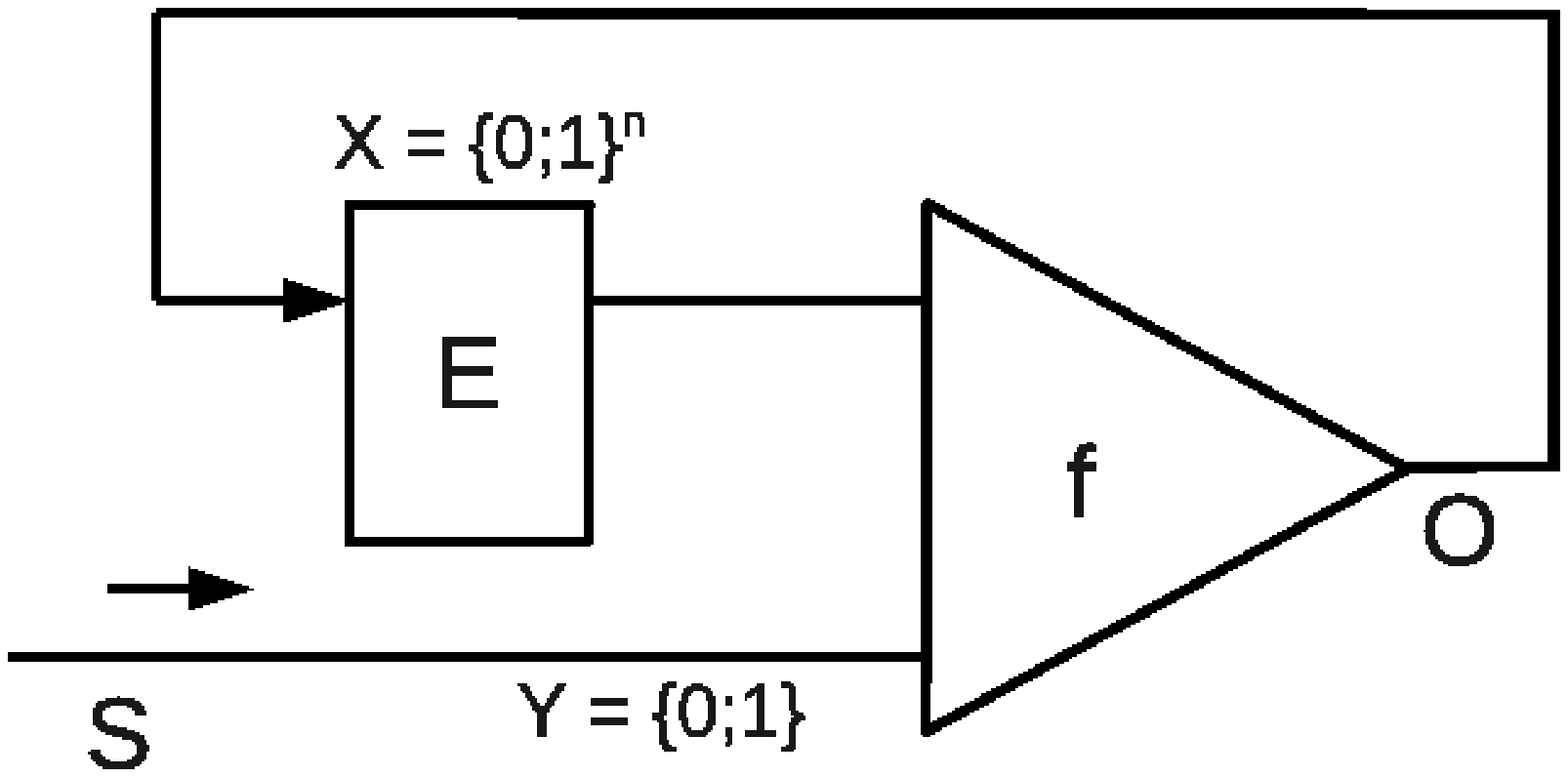,width=5.cm}}
\caption{Mealy machine for chaotic algorithms.}
\label{fig:Mealy}
\end{figure}
\begin{definition}
A Mealy machine is said to be chaotic if this machine has a chaotic behavior, as expressed by Devaney.
\end{definition}
The Mealy machine we used in this document will be the chaotic iterations with $G_{f_0}$ as iterate function. Because these chaotic iterations satisfy the Devaney's definition of chaos, as stated in Section \ref{sec:DevaneysChaos}, we can conclude that our Mealy machine is a chaotic machine.
\subsection{The practical case of finite strategies}
\label{subsection:The particular case of regularity}
It is worthwhile to notice that even if the set of machine numbers is finite, we deal in practice with the \emph{infinite} set of strategies that have finite but unbounded lengths. Indeed, as suggested before, it is not necessary to store all the terms of the strategy in the memory. Only its $n^{th}$ term (an integer less than or equal to $\mathsf{N}$) has to be stored at the $n^{th}$ step, as it is illustrated in the following example. Let us suppose that a given text is input from the outside world into the computer character by character and that the current term of the strategy is computed from the ASCII code of the current stored character. Since the set of all possible texts of the outside world is infinite and the number of their characters is unbounded, we work with an infinite set of finite but unbounded strategies. Of course, the previous example is a simplistic one. A chaotic procedure should to be introduced to generate the terms of the strategy from the stream of characters.\newline
In the computer science framework, we also have to deal with a finite set of states of the form $\mathds{B}^\mathsf{N}$ and as stated before an infinite set $\mathbb{S}$ of strategies. The sole difference with the theoretical study is that instead of being infinite the sequences of $S$ are finite with unbounded length.\newline
The proofs of continuity and transitivity are independent of the finiteness of the length of strategies (sequences of $\mathbb{S}$). Sensitivity can be proved too in this situation. So even in the case of finite machine numbers, we have the two fundamental properties of chaos: sensitivity and transitivity, which respectively implies unpredictability and indecomposability (see~\cite{Devaney}, p.50). The regularity supposes that the sequences are of infinite lengths. To obtain the analogous of regularity in the context of finite sets, we define below the notion of \emph{periodic but finite} sequences.
\begin{definition}
A strategy $S\in\mathbb{S}$ is said to be \emph{periodic but finite} if $S$ is a finite sequence of length $n$ and if there exists a divisor $p$ of $n$, $p \neq n$, such that $\forall i \leqslant n-p, S^i = S^{i+p}$. A point $(E,S) \in \mathcal{X}$ is said to be \emph{periodic but finite}, if its strategy $S$ is periodic but finite.
\end{definition}
For example, $(1,2,1,2,1,2,1,2)$ ($p$=2) and $(2,2,2)$ ($p$=1), are periodic but finite. This definition can be interpreted as the analogous of periodicity definition on finite strategies. Following the proof of regularity (Section \ref{regularite}), it can be proven that the set of periodic but finite points is dense on $\mathcal{X}$, hence obtaining a desired element of regularity in finite sets, as quoted by Devaney (\cite{Devaney}, p.50): ``two points arbitrary close to each other could have completely different behaviors, the one could have a cyclic behavior as long as the system iterates while the trajectory of the second could `visit' the whole phase space''. It should be recalled that the regularity was introduced by Devaney in order to counteract the effects of sensitivity and transitivity: two points close to each other can have fundamentally different behaviors.
\section{HASH FUNCTIONS BASED ON TOPOLOGICAL CHAOS}
\label{section:APPLICATIONS IN COMPUTER SCIENCE}
\subsection{Introduction}
The use of chaotic maps to generate hash algorithms has seen several developments in recent years. In \cite{Fei2005} for example, a digital signature algorithm based on an elliptic curve and chaotic mapping is proposed to strengthen the security of an elliptic curve digital signature algorithm. Other examples of the generation of a hash function using chaotic maps can be found in \emph{e.g.} \cite{Wang2003}, \cite{Xiao20094346} and \cite{Peng2005}. However, as for digital watermarking, the use of any chaotic map does not guarantee that the resulting hash function would behave chaotically too. To our knowledge, this point is not discussed in these referenced papers, however it should be considered as important.
We define in this section a new way to construct hash functions based on chaotic iterations. As a consequence of the theory presented before, the generated hash functions satisfy the topological chaos property. Thus, various desired properties in this domain are guaranted by our approach. For example, the avalanche criterion is closely linked to the sensitivity property. 
\subsection{A chaotic machine for hash functions}
In this section, we explain a new way to obtain a hash value of a digital medium described by a binary sequence. It is based on chaotic iterations and satisfies the topological chaos property. The hash value will be the last state of some chaotic iterations: the initial state $X_0$, finite strategy $S$ and iterate function must then be defined.
\label{subsubsec:initial}
\medskip
The initial condition $X_0=\left( S,E\right) $ is composed by a $\mathsf{N} = 256$ bits sequence $E$ and a chaotic strategy $S$. In the following sequence, we describe in detail how to obtain this initial condition from the original medium.

\subsubsection{How to obtain $E$}
The first step of our algorithm is to transform the message in a normalized 256 bits sequence $E$. To illustrate this step we state that our original text is: ``\emph{The original text}''.
Each character of this string is replaced by its ASCII code (on 7 bits). Then, we add a 1 to this string.
\bigskip
\begin{center}
\begin{alltt}
\noindent 10101001 10100011 00101010 00001101
\noindent 11111100 10110100 11100111 11010011
\noindent 10111011 00001110 11000100 00011101
\noindent 00110010 11111000 11101001
\end{alltt}
\end{center}
\bigskip
So, the binary value (1111000) of the length of this string (120) is added, with another 1:
\bigskip
\begin{center}
\begin{alltt}
\noindent 10101001 10100011 00101010 00001101
\noindent 11111100 10110100 11100111 11010011
\noindent 10111011 00001110 11000100 00011101
\noindent 00110010 11111000 11101001 11110001
\end{alltt}
\end{center}
\bigskip
The whole string is copied, but in the opposite direction. This gives:
\bigskip
\begin{center}
\begin{alltt}
\noindent 10101001 10100011 00101010 00001101
\noindent 11111100 10110100 11100111 11010011
\noindent 10111011 00001110 11000100 00011101
\noindent 00110010 11111000 11101001 11110001
\noindent 00011111 00101110 00111110 10011001
\noindent 01110000 01000110 11100001 10111011
\noindent 10010111 11001110 01011010 01111111
\noindent 01100000 10101001 10001011 0010101
\end{alltt}
\end{center}
\medskip So, we obtain a multiple of 512, by duplicating this string enough and truncating at the next multiple of 512. This string in which the whole original text is contained, is denoted by $D$.
\bigskip

Finally, we split our obtained string into blocks of 256 bits and apply the exclusive-or function, obtaining a 256 bits sequence.
\bigskip
\begin{alltt}
\noindent 11111010 11100101 01111110 00010110
\noindent 00000101 11011101 00101000 01110100
\noindent 11001101 00010011 01001100 00100111
\noindent 01010111 00001001 00111010 00010011
\noindent 00100001 01110010 01000011 10101011
\noindent 10010000 11001011 00100010 11001100
\noindent 10111000 01010010 11101110 10000001
\noindent 10100001 11111010 10011101 01111101
\end{alltt}
So, in the context of Subsection \ref{subsubsec:initial}, $\mathsf{N}=256$ and $E$ is the above obtained sequence of 256 bits.
\medskip

We now have the definitive length of our digest. Note that a lot of texts have the same string. This is not a problem because the strategy we will build depends on the whole text.
Let us now build the strategy $S$.

\subsubsection{How to choose $S$}
To obtain the strategy $S$, an intermediate sequence $(u^n)$ is constructed from $D$ as follows:
\begin{itemize}
\item $D$ is split into blocks of 8 bits. Then $u^n$ is the decimal value of the $n^{th}$ block.
\item A circular rotation of one bit to the left is applied to $D$ (the first bit of $D$ is put on the end of $D$). Then the new string is split into blocks of 8 bits another time. The decimal values of those blocks are added to $(u^n)$.
\item This operation is repeated again 6 times.
\end{itemize}
\bigskip

It is now possible to build the strategy $S$:
\begin{equation*}
S^0 = u^0,~~~
S^n=(u^n+2\times S^{n-1}+n) ~(mod ~256).
\end{equation*}%
\noindent $S$ will be highly dependent to the changes of the original text, because $\theta \longmapsto 2\theta ~(mod ~1)$ is known to be chaotic as defined by Devaney \cite{Devaney}.

\subsubsection{How to construct the digest}
To construct the digest, chaotic iterations are done with initial state $X^0$,
\begin{equation*}
\begin{array}{rccc}
f: & \llbracket1,256\rrbracket & \longrightarrow & \llbracket1,256\rrbracket
\\
& (E_1,\hdots,E_{256}) & \longmapsto & (\overline{E_1},\hdots,\overline{%
E_{256}}),%
\end{array}%
\end{equation*}%
\noindent as iterate function and $S$ for the chaotic strategy.
\bigskip

\noindent The result of those iterations is a 256 bits vector. Its components are taken 4 per 4 bits and translated into hexadecimal numbers, to obtain the hash value:
\medskip
\begin{alltt}
\noindent 63A88CB6AF0B18E3BE828F9BDA4596A6
\noindent A13DFE38440AB9557DA1C0C6B1EDBDBD
\end{alltt}
\bigskip

To compare, if instead of using the text \textquotedblleft \textit{The original text}\textquotedblright\ we took \textquotedblleft \textit{the original text}\textquotedblright , the hash function returns:
\medskip
\begin{alltt}
\noindent 33E0DFB5BB1D88C924D2AF80B14FF5A7
\noindent B1A3DEF9D0E831194BD814C8A3B948B3
\end{alltt}
\bigskip

In this paper, the generation of hash value is done with the vectorial boolean negation $f_{0} $ defined in eq. (\ref{f0}). Nevertheless, the procedure remains general and can be applied with any function $f$ such that $G_f$ is chaotic. In the following subsection, a complete example of the procedure is given.

\subsection{Application example}
Let us consider the two black and white images of size $64 \times 64$ in Fig. \ref{Hash of some black and white images}, in which the pixel in position (40,40) has been changed.
\begin{figure}[h]
\centering
\subfigure[Original image.]{\includegraphics[width=0.21\textwidth]{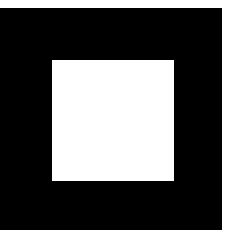}\label{Original image}}
\subfigure[Modified image.]{\includegraphics[width=0.21\textwidth]{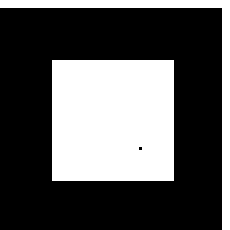}\label{Modified image}}
\caption{Hash of some black and white images.}
\label{Hash of some black and white images}
\end{figure}
\medskip
In this case, our hash function returns:
\begin{alltt}
\noindent 34A5C1B3DFFCC8902F7B248C3ABEFE2C
\noindent 9C9538E5104D117B399C999F74CF1CAD
\end{alltt}
for the Fig. \ref{Original image} and
\begin{alltt}
\noindent 5E67725CAA6B7B7434BE57F5F30F2D3D
\noindent 57056FA960B69052453CBC62D9267896
\end{alltt}
for the Fig. \ref{Modified image}.
\bigskip

Let us consider now the two 256 graylevel images of Lena ($256 \times 256$ pixels) in figure \ref{Hash of some grayscale level images}, in which the grayscale level of the pixel in position (50,50) has been transformed from 93 (fig. \ref{Original Lena}) to 94 (fig. \ref{Modified Lena}).
\begin{figure}[h]
\centering
\subfigure[Original lena.]{\includegraphics[width=0.18\textwidth]{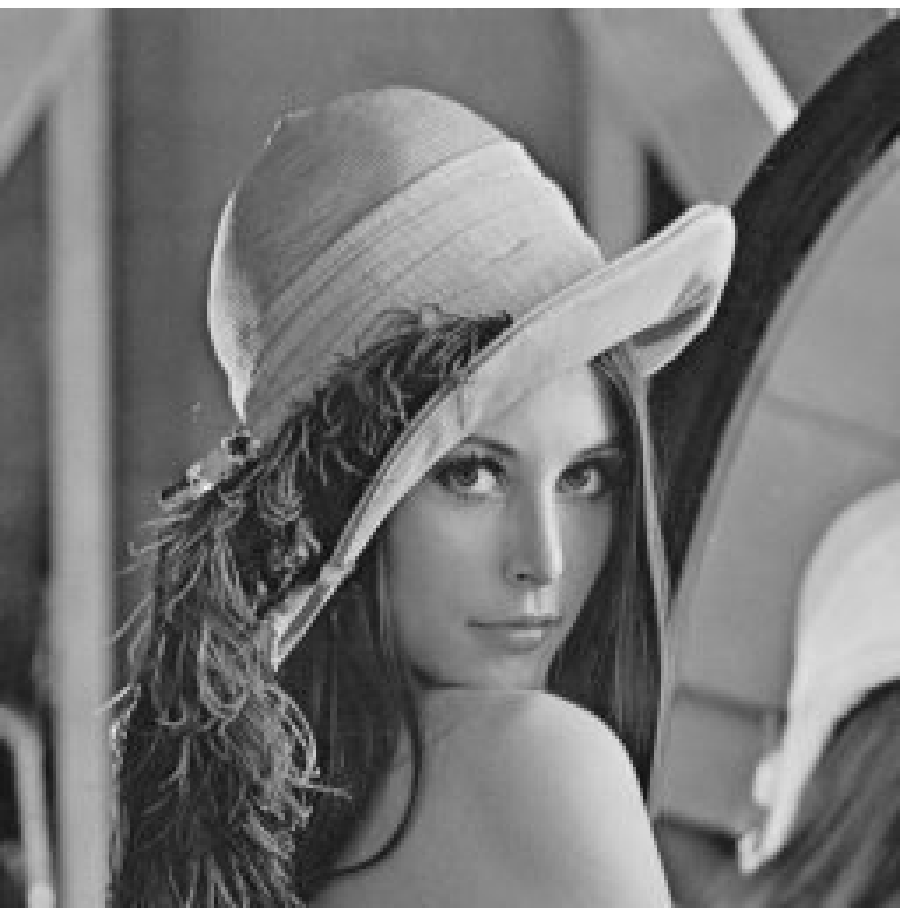}\label{Original Lena}}
\subfigure[Modified lena.]{\includegraphics[width=0.18\textwidth]{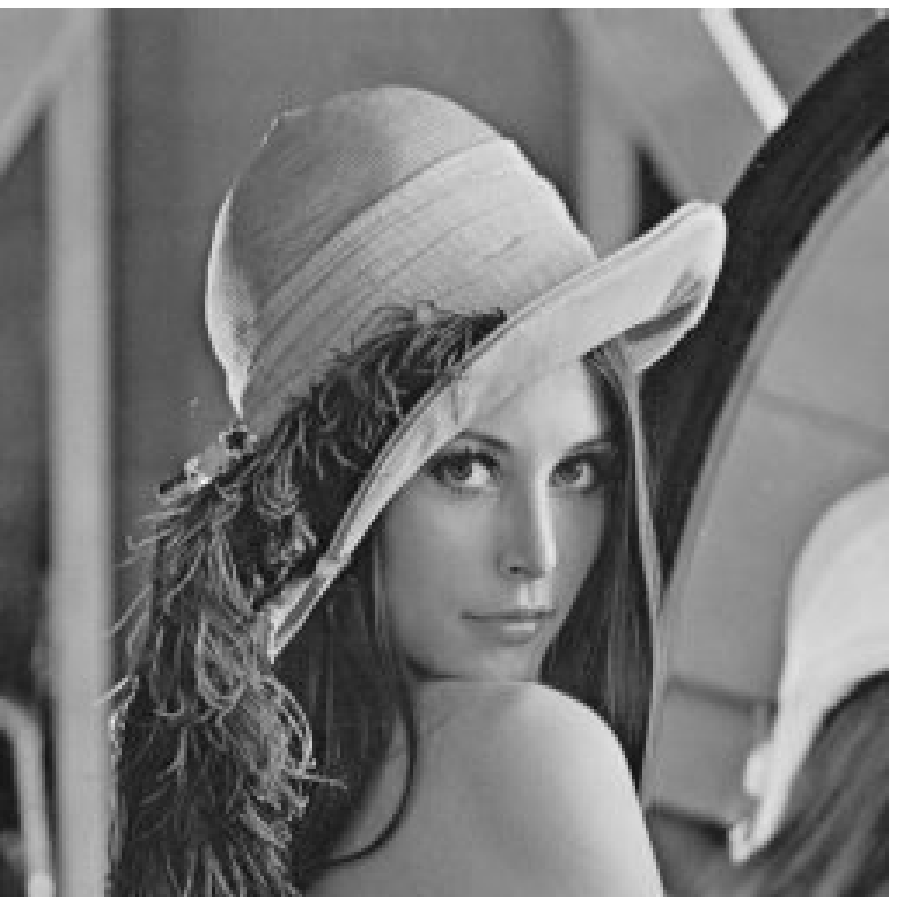}\label{Modified Lena}}
\caption{Hash of some grayscale level images.}
\label{Hash of some grayscale level images}
\end{figure}
In this case, our hash function returns:
\begin{alltt}
\noindent FA9F51EFA97808CE6BFF5F9F662DCD73
\noindent 8C25101FE9F7F427CD4E2B8D40331B89
\end{alltt}
for the left Lena and
\begin{alltt}
\noindent BABF2CE1455CA28F7BA20F52DFBD24B7
\noindent 6042DC572FCCA4351D264ACF4C2E108B
\end{alltt}
for the right Lena.
\medskip

These examples give an illustration of the avalanche effect obtained by this algorithm. A more complete study of the properties possessed by our hash functions and resistance under collisions will be studied in future work.
\section{CONCLUSION}
In this paper, a new approach to generate algorithms with chaotic behaviors is proposed. This approach is based on the well-known Devaney's topological chaos. The algorithms which are of iterative nature are based on the so-called chaotic iterations.
This is achieved by establishing a link between the notions of topological chaos and chaotic iterations. This is the first time that such an approach is considered for chaotic iterations. Indeed, we are not interested in stable states of such iterations as it has always been the case in the literature, but in their unpredictable behavior.
After a solid theoretical study, we consider the practical implementation of the proposed algorithms by evaluating the case of finite sets. We study the behavior of the induced computer programs proving that it is possible to design true chaotic computer programs.
A simple application is proposed in the area of hash functions. The security in this case is defined by the unpredictability of the behavior of the proposed algorithm.
The algorithm derived from our approach satisfies important properties of topological chaos such as sensitivity to initial conditions, uniform repartition (as a result of the transitivity), and unpredictability.
The results expected in our study have been experimentally checked. The choices made in this first study are simple: the aim was not to find the best hash function, but to give simple illustrated examples to prove the feasibility in using the new kind of chaotic algorithms in computer science.
In future work, we will investigate other choices of iteration functions and chaotic strategies. We will try to characterize transitive functions. Other properties induced by topological chaos, such as entropy, will be explored and their interest in the information security framework will be shown.
\bibliographystyle{plain}
\bibliography{wcci_papier4.bib}
\end{document}